\DeclareMathOperator{\sech}{sech}
\newtheorem{proposition}{Proposition}
\newtheorem{remark}{Remark}
\numberwithin{equation}{section}
\begin{document}

\vspace{0.5 cm}

\begin{center}
\textbf{\large Exact Solutions for Compactly Supported Parabolic and Landau Barriers}
\vspace{6mm}
\end{center}

\noindent \textbf{\normalsize Peter Collas}\footnote{Department of Physics and Astronomy, California
State University, Northridge, Northridge, CA 91330-8268. Email: peter.collas@csun.edu.}
\textbf{\normalsize and David Klein}\footnote{Department of Mathematics, California State University,
Northridge, Northridge, CA 91330-8313. Email: david.klein@csun.edu.}\\

\vspace{.2cm}
\parbox{11cm}{\noindent{\small Abstract. We derive exact solutions to the one-dimensional Schr\"odinger equation for compact support parabolic and hyperbolic secant potential barriers, along with combinations of these types of potential barriers.  We give the expressions for transmission and reflection coefficients and calculate some dwell times of interest.}} 
\vspace{.2cm}

\noindent {\small KEY WORDS:  Parabolic barriers, Landau barriers, quantum tunneling, dwell time}

\section{Introduction}\label{Intro}

Quantum tunneling plays far reaching roles in a broad range of physical phenomena arising in subatomic, atomic and molecular physics, condensed matter physics, microcircuitry, physical chemistry.  The 2025 Nobel prize for physics was awarded for research demonstrating macroscopic quantum mechanical tunneling and energy quantization in quantum Josephson junction circuits. \cite{DMC85,MDC85,MDC20}

Extensive investigations involving smooth barriers have been carried out, primarily using numerical methods \cite{DK10,MSSY14,H14,IL15,XH16}.  Research in parabolic potential barriers has been particularly active \cite{ST81,CH88,AGI95,HSADV13,LXLM20,Y24}.  In this paper, we find exact solutions to the one-dimensional Schr\"odinger equation for continuous parabolic potential barriers with compact support (see Fig. \ref{singleBplot}), and for the Landau and Lifshitz potential $U_0/\cosh^2 (\alpha x),\;\alpha, U_0 > 0$, as well as a modification of that potential with compact support (see Fig. \ref{LLUs}).  The potentials we consider are continuous in all cases, and it follows that the wave function solutions are at least twice continuously differentiable. This may be contrasted with rectangular barriers whose wave function solutions have discontinuous second derivatives.

Our paper is organized as follows.  In Section \ref{SinBar} we solve the Schr\"odinger equation for parabolic potential barriers with compact support.  In Section \ref{RT} we find exact expressions for the solution for an incident particle (from the left), including formulas for the transmission and reflection coefficients.  In Section \ref{MulBar} we consider multiple parabolic barriers and, in Section \ref{qbdw}, for a particular double parabolic barrier, we find a quasi-bound (resonant) state and compare the dwell times for the quasi-bound state and a typical regular state in several regions.  Then in Section \ref{LL} we consider in detail a barrier whose solution was first obtained by Landau and Lifshitz \cite{LL77}.  We go over the solution in detail and obtain wave function solutions for modified compact support versions of the potential.  Finally in Section \ref{sum} we summarize our results.  In the Appendices we provide several useful complements to the main text:  In Appendix \ref{ct} we give a transformation used in obtaining the parabolic barrier solution.  In Appendix \ref{psw} we give power series representations of the solutions.  In Appendix \ref{jdt} we review the concept of dwell time and include a proposition involving the use of the probability current $j_{in}$.  Since for a major part of our paper we adopted units for which $m=\hbar=1$,  in Appendix \ref{units} we explain in detail how to insert the $m$ and $\hbar$ back in the results.

\section{Parabolic barrier with compact support}\label{SinBar}

In this section, we consider the general symmetric parabolic barrier potential, conveniently, parametrized in terms of the positive constants $\alpha$ and $U_0$, given by
\begin{equation}
U(x)=
\begin{dcases}
0, &x\leq -\alpha\\
\frac{U_0}{\alpha^2}\left(-x^2+\alpha^2\right),&-\alpha\leq x\leq\alpha \label{U}\\
0, &x\geq\alpha\,.\\ 
\end{dcases}
\end{equation}
Observe that $U(0)=U_0$ and the support domain is $x\in[-\alpha,\alpha]$.  A plot is shown in Fig. \ref{singleBplot}. 
\begin{figure}[htbp!]
  \begin{center}
    \includegraphics[width=4 in]{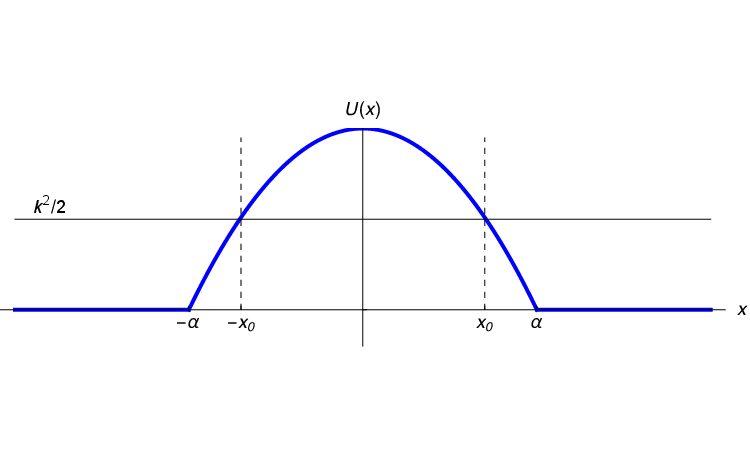}
     \end{center}
   \caption{The parabolic barrier $U(x)$ with $\alpha=1/2,\;U_0=1$.  The dashed vertical lines show the locations of the two turning points $x_0\approx 0.35$ for an incoming particle with kinetic energy $k^2/2=1/2$ with the convention $\hbar=m=1$.}
 \label{singleBplot}
\end{figure}

In what follows, we adopt units so that $\hbar=m=1$ (see Appendix \ref{units} for elaboration).  With this convention, the Schr\"{o}dinger equation restricted to the support interval $x\in\left[-\alpha,\alpha\right]$ for a particle with kinetic energy  $k^2/2$, is given by,
\begin{equation}
\frac{d^2 \psi(x)}{dx^2}+\left[\frac{2U_0}{\alpha^2}x^2 -(2U_0-k^2)\right]\psi(x)=0.\label{S1}
\end{equation}
  
To find the (exact) general solution to Eq. \eqref{S1}, we first consider the equation
\begin{equation}
\frac{d^2w(z)}{dz^2}+\left(\frac{z^2}{4}-a\right)w(z)=0.\label{wz}
\end{equation}
In Appendix \ref{ct}, we prove that if $w(z)$ is a solution of Eq.\eqref{wz}, then $V(x)=w(\sqrt{2}\,\sigma^{(1/4)}x)$ is a solution of
\begin{equation}
\frac{d^2V(x)}{dx^2}+\left(\sigma x^2-\lambda\right)V(x)=0,\label{Vx}
\end{equation}

with $\lambda=2a\sqrt{\sigma}$.  Equation \eqref{Vx} is identical to the Schr\"odinger equation \eqref{S1} provided
\begin{align}
\sigma&=\frac{2U_0}{\alpha^2},\label{sigma}\\\nonumber\\
\lambda&=2U_0-k^2.\label{lambda}
\end{align}
Solutions to Eq. \eqref{wz} are known and given in \cite{OLBC10}, Sec. 12, p. 315.  It is shown there that the even and odd functions $w_e$ and $w_o$ given by
\begin{align}
w_e(a,z)&=e^{-\frac{iz^2}{4}}M\left(\frac{1}{4}-\frac{ia}{2},\frac{1}{2},\frac{iz^2}{2}\right),\label{we}\\\nonumber\\
w_o(a,z)&=ze^{-\frac{iz^2}{4}}M\left(\frac{3}{4}-\frac{ia}{2},\frac{3}{2},\frac{iz^2}{2}\right),\label{wo}
\end{align}
are solutions to Eq. \eqref{wz}, where the functions $M$ are confluent hypergeometric functions. Our Appendix \ref{psw} shows why $w_e$ is even and $w_o$ is odd, and from this it follows that they are linearly independent, and thus any solution to Eq. \eqref{wz} is necessarily a linear combination of them.

Now let $V(x):=\psi(x)$ in Eq. \eqref{Vx} and define the independent solutions $\psi_e(x)$ and $\psi_o(x)$ to Eq.\eqref{S2} by,
\begin{equation}
\psi_e(x) = w_e(\sqrt{2\beta}\,x)\quad \text{and}\quad \psi_o(x)=w_o(\sqrt{2\beta}\,x),\label{wewo}
\end{equation}

where
\begin{equation}
\beta=\frac{\sqrt{2U_0}}{\alpha}.\label{beta}
\end{equation}

More precisely,
\begin{align}
\psi_e(\alpha,\beta,k,x)&=e^{-\frac{i\beta}{2}x^2}M\left[\frac{1}{4}\!\left(1+\frac{ik^2}{\beta}-i\alpha^2\beta\right),\frac{1}{2},\,i\beta x^2\right],\label{psie}\\\nonumber\\
\psi_o(\alpha,\beta,k,x)&=x\sqrt{2\beta}\,e^{-\frac{i\beta}{2}x^2}M\left[\frac{1}{4}\!\left(3+\frac{ik^2}{\beta}-i\alpha^2\beta\right),\frac{3}{2},\,i\beta x^2\right].\label{psio}
\end{align}

Using Eq. \eqref{windep} in Appendix \ref{psw} we see that
\begin{equation}
\psi_e(0)=1, \quad \psi_e^\prime(0)=0, \quad \psi_o(0)=0, \quad \psi_o^\prime(0)=\sqrt{2\beta}.\label{Fnorm}
\end{equation}

The above even and odd solutions are linearly independent, and therefore any solution of the Schr\"{o}dinger equation, Eq.\eqref{S1}, with $x$ restricted to the interval $[-\alpha,\alpha]$, is a linear combination of $\psi_e(x)$ and $\psi_o(x)$.

\section{Reflection and transmission coefficients}\label{RT}

In this section we follow an efficient approach described in Fl\"ugge \cite{F94} p.42. This approach is applicable to an arbitrary single symmetric barrier with compact support. In our case, $U(x)=U(-x)$ with support in the interval $x\in\left[-\alpha,\alpha\right]$ (c.f. Eq.\eqref{U}).  A solution to
\begin{equation}
\frac{d^2 \psi(x)}{dx^2}+\left[k^2-2U(x)\right]\psi(x)=0\label{S2}
\end{equation}

representing an incoming wave from the left, partially reflected and partially transmitted by the potential barrier given by Eq.\eqref{U}, has the form,
\begin{equation}
\psi(x)=\psi_{1}I_{\left[-\infty,-\alpha\right]}+\psi_{2}I_{\left[-\alpha,\alpha\right]}+\psi_{3}I_{\left[\alpha,\infty\right]},\label{psiI}
\end{equation}
where $I_{\left[a,b\right]}$ with $a<b$ is the indicator function taking the value $1$ in the interval $[a,b]$ and the value $0$ elsewhere, and 
\begin{align}
\psi_1&=e^{ikx} + re^{-ikx},\label{psi1}\\
\psi_2&=A \psi_e(x) + B \psi_o(x),\label{psi2}\\
\psi_3&=te^{ikx}.\label{psi3}
\end{align}

To insure that the function $\psi(x)$ is continuously differentiable at the points $x=-\alpha$ and $x=\alpha$, we must solve the following equations for $r, A, B, t$:
\begin{align}
e^{-ik\alpha} + re^{ik\alpha}&=\,\,\,\,A \psi_e(\alpha) - B \psi_o(\alpha),\label{sys1}\\
ik(e^{-ik\alpha} - re^{ik\alpha})&=-A\psi_e^\prime(\alpha) + B \psi_o^\prime(\alpha),\label{sys2}\\
te^{ik\alpha}&= \,\,\,\,A\psi_e(\alpha) + B \psi_o(\alpha),\label{sys3}\\
ikte^{ik\alpha}&= \,\,\,\,A\psi_e^\prime(\alpha) + B \psi_o^\prime(\alpha),\label{sys4}
\end{align}

where Eqs. \eqref{sys1}-\eqref{sys4}  are justified by the fact that $\psi_e$ is even and $\psi_o$ is odd.  Following Fl\"ugge, define
\begin{equation}
L_{\displaystyle e}:=\alpha\,\frac{\psi_e^\prime(\alpha)}{\psi_e(\alpha)}\quad \text{and} \quad L_{\displaystyle o}:=\alpha\,\frac{\psi_o^\prime(\alpha)}{\psi_o(\alpha)}. \label{L+L-}
\end{equation}

Adding Eq. \eqref{sys1} to Eq. \eqref{sys3} and subtracting Eq. \eqref{sys2} from Eq. \eqref{sys4} we obtain
\begin{align}
 (t+r)e^{ik\alpha}+e^{-ik\alpha}&=2A\psi_e(\alpha),\label{A1}\\
ik(t+r)e^{ik\alpha}-ike^{-ik\alpha}&=2A\psi^{\prime}_e(\alpha).\label{A2}
\end{align}

Taking their ratio and multiplying by $\alpha$ we obtain,
\begin{equation}
L_{\displaystyle e}=ik\alpha\,\frac{(t+r)e^{ik\alpha}-e^{-ik\alpha}}{(t+r)e^{ik\alpha}+e^{-ik\alpha}}.\label{L+}
\end{equation}

Subtracting Eq. \eqref{sys1} from Eq. \eqref{sys3} and adding Eq. \eqref{sys2} to Eq. \eqref{sys4}, we find
\begin{equation}
L_{\displaystyle o}=ik\alpha\,\frac{(t-r)e^{ik\alpha}+e^{-ik\alpha}}{(t-r)e^{ik\alpha}-e^{-ik\alpha}}.\label{L-}
\end{equation}

Solving Eqs. \eqref{L+} and \eqref{L-} for $r$ and $t$, we have
\begin{align} 
r=&-\frac{1}{2}e^{-2ik\alpha}\left[\frac{L_{\displaystyle e} + ik\alpha}{L_{\displaystyle e} - ik\alpha}+\frac{L_{\displaystyle o}- + ik\alpha}{L_{\displaystyle o} - ik\alpha}\right],\label{r}\\
t=&-\frac{1}{2}e^{-2ik\alpha}\left[\frac{L_{\displaystyle e} + ik\alpha}{L_{\displaystyle e} - ik\alpha}-\frac{L_{\displaystyle o} + ik\alpha}{L_{\displaystyle o} - ik\alpha}\right].\label{t}
\end{align}

Using the above results and solving for $A$ and $B$ from Eqs. \eqref{sys1} and \eqref{sys3} we obtain,
\begin{align}
A =& \frac{(t+r)e^{ik\alpha} +e^{-ik\alpha}}{2\psi_e(\alpha)},\label{A}\\
B =& \frac{(t-r)e^{ik\alpha} -e^{-ik\alpha}}{2\psi_o(\alpha)}.\label{B}
\end{align}

In addition, we can calculate the reflection coefficient $R$ and transmission coefficient $T$ from the above equations, 
\begin{equation}
R \equiv |r|^2=\frac{(L_{\displaystyle e} L_{\displaystyle o} + k^2\alpha^2)^2}{\left(L_{\displaystyle e}^2+k^2\alpha^2\right)
   \left(L_{\displaystyle o}^2+k^2\alpha^2\right)},\label{rpoly}
\end{equation}

and 
\begin{equation}
T\equiv |t|^2=\frac{k^2\alpha^2(L_{\displaystyle e} - L_{\displaystyle o})^2}{{\left(L_{\displaystyle e}^2+k^2\alpha^2\right)
   \left(L_{\displaystyle o}^2+k^2\alpha^2\right)}}.\label{tpoly}
\end{equation}

It follows from the first equalities in Eq.\eqref{rpoly} and \eqref{tpoly} that $R+T=1$.\\

\begin{remark}\label{C2}
We point out that $\psi(x)$ given by Eq. \eqref{psiI} is twice continuously differentiable at $x=\pm\alpha$ (and of course analytic at all other points).  To see this, first observe, using Eq.\eqref{S2}, that
\begin{equation}\label{LH1}
\lim_{x\to\alpha}\psi^{\prime\prime}(x)=\lim_{x\to\alpha}\left[-k^2+2U(x)\right]\psi(x)=\left[-k^2+2U(\alpha)\right]\psi(\alpha),
\end{equation}
by continuity of $\left[-k^2+2U(x)\right]\psi(x)$ at $x=\alpha$.  By definition,
\begin{equation}\label{LH2}
\psi^{\prime\prime}(\alpha)=\lim_{x\to\alpha}\frac{\psi^{\prime}(x)-\psi^{\prime}(\alpha)}{x-\alpha}.
\end{equation}
Applying L'H\^opital's rule to the right side of Eq.\eqref{LH2}, and using Eq.\eqref{LH1} yields,
\begin{equation}
\psi^{\prime\prime}(\alpha)=\lim_{x\to\alpha}\frac{\psi^{\prime}(x)-\psi^{\prime}(\alpha)}{x-\alpha}=\lim_{x\to\alpha}\psi^{\prime\prime}(x)=\left[-k^2+2U(\alpha)\right]\psi(\alpha)
\end{equation}
Thus, $\psi^{\prime\prime}(\alpha)$ exists and $\psi^{\prime\prime}(x)$ is continuous at $x=\alpha$ by Eq.\eqref{LH1}.  The same argument applies to the case $x=-\alpha$.  Moreover, this result holds for any continuous potential $U(x)$, including those considered in the following sections.
\end{remark}

\section{Multiple parabolic barriers}\label{MulBar}

In this section we consider multiple parabolic barriers. We begin with a translation along the $x$ axis of the potential Eq.\eqref{U} controlled by a parameter $\gamma$:
\begin{equation}
U(x)=
\begin{dcases}
0, &x\leq -\alpha+\gamma\\
\frac{U_0}{\alpha^2}\left(-(x-\gamma)^2+\alpha^2\right),&-\alpha+\gamma\leq x\leq\alpha+\gamma \label{U2}\\
0, &x\geq\alpha+\gamma.\\ 
\end{dcases}
\end{equation}

Using results from the previous sections, we can find exact wave function solutions for Eq. \eqref{U2} even though $U(x)\neq U(-x)$.

For $x\in\left[-\alpha+\gamma,\alpha+\gamma\right]$ the Schr\"odinger equation is,
\begin{equation}
\frac{d^2 \psi(x)}{dx^2}+\left[\frac{2U_0}{\alpha^2}(x-\gamma)^2 -(2U_0-k^2)\right]\psi(x)=0,\label{S3}
\end{equation}

and with this restriction on the range of $x$, the solutions, $\psi_e(x)$ and $\psi_o(x)$, are given by
\begin{align}
\psi_e(x)&=e^{-\frac{i\beta}{2}(x-\gamma)^2}M\left[\frac{1}{4}\!\left(1+\frac{ik^2}{\beta}-i\alpha^2\beta\right),\frac{1}{2},\,i\beta (x-\gamma)^2\right],\label{psie2}\\\nonumber\\
\psi_o(x)&=\sqrt{2\beta}(x-\gamma)e^{-\frac{i\beta}{2}(x-\gamma)^2}M\left[\frac{1}{4}\!\left(3+\frac{ik^2}{\beta}-i\alpha^2\beta\right),\frac{3}{2},\,i\beta (x-\gamma)^2\right],\label{psio2}
\end{align}

where again
\begin{equation}
\beta=\frac{\sqrt{2U_0}}{\alpha}.\label{beta1}
\end{equation}

Examples of double barriers are shown in Figs. \ref{doubleB} and \ref{doubleBsym}. The parameter values for the barrier on the left in Fig. \ref{doubleB} are given by $\alpha=1, U_0=1, \beta=\sqrt{2}, \gamma=-2$ and for the barrier on the right, $\alpha=1, U_0=2, \beta=2, \gamma=2$.  We elaborate on Fig. \ref{doubleBsym} below.

A more general potential has the form,
\begin{equation}
U=\sum_{i=1}^{n}U_{i}I_{\left[i\right]},\label{Ugen}
\end{equation}

where the union of the disjoint intervals (disjoint except possibly at endpoints) indexed by $i$ is the entire real line, and some of the $U_i$ may equal zero, while others have the form of Eq.\eqref{U2}.  The Schr\"odinger equation for potential $U$ may be written as
\begin{equation}
\psi^{\prime\prime}+\left(k^2-2U\right)\psi=0,\label{Ugen}
\end{equation}
where
\begin{equation}
\psi=\sum_{i=1}^{n}\psi_{i}I_{\left[i\right]}.\label{psigen}
\end{equation}

Then since
\begin{equation}
I_{\left[i\right]}I_{\left[i\right]}=I_{\left[i\right]}\;\;\text{and}\;\;I_{\left[i\right]}I_{\left[j\right]}=0\;\;\text{for}\;\; i\neq j,\label{Irules}
\end{equation}

the Schr\"odinger equation reduces to a sum of uncoupled differential equations
\begin{equation}
\sum_{i=1}^{n}\left[\psi_{i}^{\prime\prime}I_{\left[i\right]}+\left(k^2-2UI_{\left[i\right]}\right)\psi I_{\left[i\right]}=0\right].\label{Sgen}
\end{equation}

\begin{figure}[htbp!]
  \begin{center}
    \includegraphics[width=4 in]{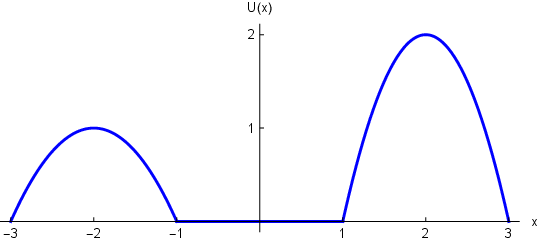}
     \end{center}
   \caption{A double parabolic barrier. For the parabola on the left, $U(\alpha,\gamma,U_0,x)=U(1,-2,1,x)$ and on the right $U(\alpha,\gamma,U_0,x)=U(1,2,2,x)$.}
 \label{doubleB}
\end{figure}
\begin{figure}[htbp!]
  \begin{center}
    \includegraphics[width=4 in]{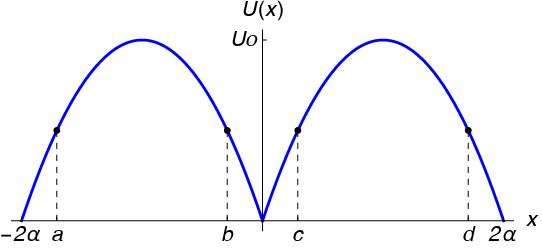}
     \end{center}
   \caption{A double barrier with turning points for a particle with energy  $E<U_0$ at $a, b, c, d$.}
 \label{doubleBsym}
\end{figure}
To further illustrate, we outline steps to calculate the wave function for the double barrier potential shown in Fig. \ref{doubleBsym}. For the barrier on the left $\gamma_1=-\alpha$ and for the barrier on the right $\gamma_2=\alpha$ so that $-\alpha+\gamma_1=-2\alpha$, $\alpha+\gamma_2=2\alpha$ and $\alpha+\gamma_1=-\alpha+\gamma_2=0$.  The turning points $a, b, c, d$ are the intersections of some energy line $E<U_0$ with the barrier.  In this double barrier case we have four regions so $\psi$ is
\begin{align}
\psi(x)&=\psi_{1}I_{\left[-\infty,-\alpha+\gamma_1\right]}+\psi_{2}I_{\left[-\alpha+\gamma_1,\alpha+\gamma_1\right]}+\psi_{3}I_{\left[-\alpha+\gamma_2,\alpha+\gamma_2\right]}+\psi_{4}I_{\left[\alpha+\gamma_2,\infty\right]}\nonumber\\
&:=\psi_{1}I_{\left[1\right]}+\psi_{2}I_{\left[2\right]}+\psi_{3}I_{\left[3\right]}+\psi_{4}I_{\left[4\right]}.\label{dpsiI1}
\end{align}

The wavefunctions in the regions 1 through 4 in Fig.\ref{doubleBsym} are given by,
\begin{align}
&\psi_1(k,x)=e^{ikx}+r\,e^{-ikx},\label{psi1}\\\nonumber\\
&\psi_2(\alpha,\beta,\gamma_1,k,x)=A\,\psi_e(\alpha,\beta,\gamma_1,k,x)+B\,\psi_o(\alpha,\beta,\gamma_1,k,x),\label{psi2}\\\nonumber\\
&\psi_3(\alpha,\beta,\gamma_2,k,x)=C\,\psi_e(\alpha,\beta,\gamma_2,k,x)+D\,\psi_o(\alpha,\beta,\gamma_2,k,x),\label{psi3}\\\nonumber\\
&\psi_4(k,x)=t\,e^{ikx},\label{psi4}
\end{align}
and, using the shorthand $\psi_i(x)\equiv\psi_i(\alpha,\beta,\gamma,k,x)$, the boundary conditions include
\begin{align}
&\psi_1(-\alpha+\gamma_1)=\psi_2(-\alpha+\gamma_1),\label{bpsi1}\\
&\psi_2(\alpha+\gamma_1)=\psi_3(-\alpha+\gamma_2),\label{bpsi2}\\
&\psi_3(\alpha+\gamma_2)=\psi_4(\alpha+\gamma_2).\label{bpsi3}
\end{align}

The additional three corresponding equations which match the derivatives of the $\psi_i$ at same positions are also required.  These six equations determine the six constants $A, B, C, D, r, t$ and it follows from the rules in Eq. \eqref{Irules} and Eq. \eqref{dpsiI1} that
\begin{equation}
|\psi|^2=|\psi_{1}|^2I_{\left[1\right]}+|\psi_{2}|^2I_{\left[2\right]}+|\psi_{3}|^2I_{\left[3\right]}+|\psi_{4}|^2I_{\left[4\right]}.\label{dpsi2}
\end{equation}
We will make use of this formula in the following section.

\section{Quasi-bound states and dwell times}\label{qbdw}

In this section, we find a quasi-bound state and calculate dwell times for the double parabolic barrier of Fig. \ref{doubleBsym} in various regions. The reader may wish to refer to Appendix \ref{jdt} where we review the concept of dwell time and prove a proposition \ref{Prop} regarding $j_{in}$ for the case of multiple barriers with compact support.

Here we assign numerical values obtained from Li and Yang \cite{LY24}, and for convenience converted to atomic units, as follows.  
\begin{align}
\hbar&\approx 10^{-34}\,\textup{J}\cdot \textup{s}=1\,\textup{au},\qquad\qquad\qquad\; m=m_e\approx 10^{-30}\,\textup{kg}=1\,\textup{aum},\label{F1SIa}\\
U_0&=3eV\approx 5\times 10^{-19}\,\textup{J}=0.125\,\textup{h}, \quad 2\alpha=10\AA\approx 10^{-9}\,\textup{m}=20\,\textup{b},\label{F1SIb}
\end{align}

where au stands for aomic units of action, aum stands atomic unit of mass, h stands for hartree and b for bohr, respectively.  Also in this section we use the notation of Eq. \eqref{mnot1} for our wave function, that is, $\varphi_i=\varphi_i(\hbar,m,\alpha,\gamma_1,\gamma_2,U_0,E,x)$. Analogous to Eq. \eqref{psigen}, we also write
\begin{equation}
\varphi=\sum_{i=1}^{4}\varphi_{i}I_{\left[i\right]},\label{psigen2}
\end{equation}
where the four intervals $\{I_{\left[i\right]}\}$ are the same as in Eq.\eqref{dpsiI1}.

Fig. \ref{doubleφ1} is a plot of $|\varphi|^2$ for the double barrier in Fig. \ref{doubleBsym} using the values in Eqs, \eqref{F1SIa} and \eqref{F1SIb} for the parameters and $\alpha=-\gamma_1$ for the barrier on the left and $\alpha=\gamma_2$ for the barrier on the right. The incident particle's energy is $E=0.02$h.  In the colored part of the curve the particle is under the influence of the potential.  The horizontal line along the $x$-axis is not zero and has more structure which cannot be seen without zooming in.

\begin{figure}[htbp!]
  \begin{center}
    \includegraphics[width=4 in]{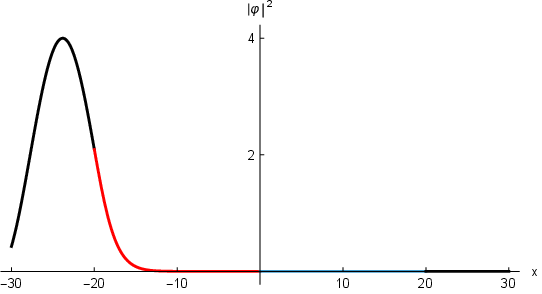}
     \end{center}
   \caption{We show a plot of $|\varphi|^2$ for the double barrier in Fig. \ref{doubleBsym} using the values in Eqs, \eqref{F1SIa} and \eqref{F1SIb} for the parameters and $\alpha=-\gamma_1$ for the barrier on the left and $\alpha=\gamma_2$ for the barrier on the right. The incident particle's energy is $E=0.02$h.}
 \label{doubleφ1}
\end{figure}

One may look for quasi-bound states by either finding the maximum of $|\varphi_2(E)|^2$ at $x=0$ or by numerically solving the equation $T(E)=1$ for $E$, where $T$ is the transmission coefficient.  In our case, for the parameter values given in Eqs, \eqref{F1SIa} and \eqref{F1SIb}, we find a quasi-bound state at $E=0.06115146$ h.  In Fig. \ref{doubleφ2} we see that plot of $|\varphi|^2$ is dramatically different from that in Fig. \ref{doubleφ1}.

\begin{figure}[htbp!]
  \begin{center}
    \includegraphics[width=4 in]{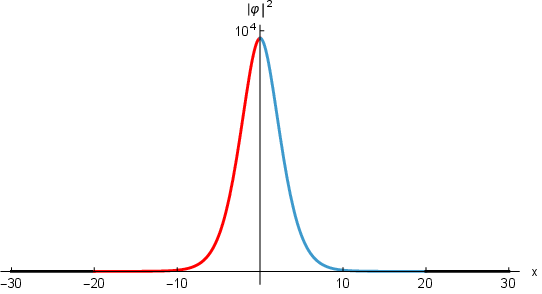}
     \end{center}
   \caption{This is again a plot of $|\varphi(x)|^2$ for the double barrier in Fig. \ref{doubleBsym} using the values in Eqs, \eqref{F1SIa} and \eqref{F1SIb} for the parameters and $\alpha=-\gamma_1$ for the barrier on the left and $\alpha=\gamma_2$ for the barrier on the right. The incident particle's energy is $E=0.06115146$ h which is the energy of the quasi-bound state.}
 \label{doubleφ2}
\end{figure}

We next compare some dwell times for the above two energy states.  For a given incoming particle with kinetic energy $E$, we let the four turning points be denoted by $a,\,b,\,c,\,d$, (see Fig. \ref{doubleBsym}).  The corresponding dwell times are given by
\begin{align}
\tau_{\left[a,b\right]}&=\frac{1}{j_{in}}\int_a^b\vert\varphi_2\vert^2dx,\label{taua}\\\nonumber\\
\tau_{\left[b,c\right]}&=\frac{1}{j_{in}}\int_b^0\vert\varphi_2\vert^2dx+\frac{1}{j_{in}}\int_0^c\vert\varphi_3\vert^2dx,\label{taub}\\\nonumber\\
\tau_{\left[c,d\right]}&=\frac{1}{j_{in}}\int_c^d\vert\varphi_3\vert^2dx.\label{tauc}
\end{align}

For a typical energy $E=0.02$ h, the turning points (in au) are, $a=-d,\,b=-c,\,c=0.835,\,d=19.165$, and $j_{in}=0.2$ au of velocity.  We then find
\begin{align}
\tau_{\left[a,b\right]}&=11.5\,\textup{aut},\label{tau1a}\\
\tau_{\left[b,c\right]}&=2.13\times 10^{-5}\,\textup{aut},\label{tau1b}\\
\tau_{\left[c,d\right]}&=1.40\times 10^{-5}\,\textup{aut},\label{tau1c}
\end{align}

where aut stands for au of time.

However at the energy $E=0.06115146$ h of the quasi-bound state the dwell times, as anticipated, change dramatically.  The turning points (in au) are, $a=-d,\,b=-c,\,c=2.85,\,d=17.1$, and $j_{in}=0.35$ au.  We then find
\begin{align}
\tau_{\left[a,b\right]}&=2.40\times 10^4\,\textup{aut},\label{tau2a}\\
\tau_{\left[b,c\right]}&=1.25\times 10^5\,\textup{aut},\label{tau2b}\\
\tau_{\left[c,d\right]}&=2.40\times 10^4\,\textup{aut}.\label{tau2c}
\end{align}

In addition to the order of magnitude changes, we point out the symmetry of the dwell times, which is again anticipated since we have that the transmission coefficient $T=1$ in this case.

\section{The Landau and Lifshitz barrier and its compact support version}\label{LL}

An interesting class of smooth single barriers is given by the potentials
\begin{equation}
U(\delta,x)=\frac{U_0}{\cosh^\delta(\alpha x)},\;\;\;\delta>0.\label{Udel}
\end{equation}

For $\delta$ a non-integer, the Schr\"odinger equation has an infinite number of branch points on the imaginary axis, while for integer $\delta$ it has an infinite number of irregular singularities, except when $\delta=1,2$, in which case the differential equation has an infinite number of regular singularities, namely, first and second order poles respectively.  

Landau and Lifshitz use the transformation, $\xi=\tanh{(\alpha x)}$, to transform the $\delta=2$ case Schr\"{o}dinger equation into the associated Legendre differential equation.  Unfortunately, in the case $\delta=1$, this transformation, as well as variations of it, give rise to square roots in the resulting differential equation. With the exception of $\delta=1$ or $2$, the resulting Schr\"{o}dinger equation has irregular singular points, so exact solutions for other values of $\delta$ are probably not available.  

The $\delta=2$ case is a problem in Landau and Lifshitz's quantum mechanics text \cite{LL77}, and more recently was solved independently by Xiao and Huang \cite{XH16} in a different way.  Unfortunately, due to an error in the transformation of their differential equation, their results for this barrier case are not correct. In particular the resulting transmission coefficient, their Eq. (44), does not agree with the one given in our solution, Eq. \eqref{T<}, below, which agrees with Landau's and Lifshitz's expression.

Not surprisingly Landau and Lifshitz provide only a terse outline of a solution, omitting important steps.  In this section we a provide a more detailed and direct derivation.  Then we modify the potential of Eq. \eqref{Udel} with $\delta=2$, to one with compact support, so that exact solutions to the Schr\"odinger equation with this modified potential, together with our parabolic or other potentials, can be found.  

The Landau and Lifshitz potential is
\begin{equation}
U(x)=\frac{U_0}{\cosh^2(\alpha x)},\label{LU2}
\end{equation}

and its graph is shown in Fig. \ref{barrier}.

\begin{figure}[htbp!]
  \begin{center}
    \includegraphics[width=4 in]{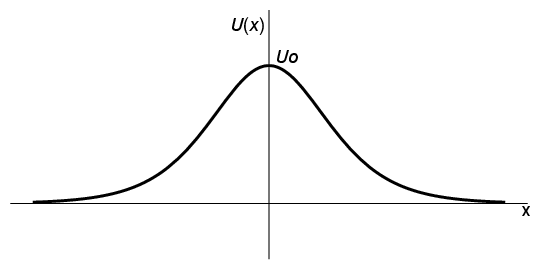}
     \end{center}
   \caption{Landau and Lifshitz's barrier.}
 \label{barrier}
\end{figure}

As in previous sections, we set $m=\hbar=1$.  The Schr\"{o}dinger equation for the potential of Eq. \eqref{LU2} is
\begin{equation}
\frac{d^2 \psi(x)}{dx^2}+2\left(E-\frac{U_0}{\cosh^2{(\alpha x)}}\right)\psi(x)=0.\label{LS1}
\end{equation}

It is easy to check that there is an \textit{irregular} singularity at $x=\infty$, where $\cosh^2{(\alpha x)}\rightarrow\infty,\, \sech^2{(\alpha x)}\rightarrow 0$ and the potential term vanishes.  There are also an infinite number of regular singularities at
\begin{equation}
x=\pm\, i\frac{(2n+1)}{\alpha}\frac{\pi}{2},\;\;n\in\mathbb{Z},\label{sing}
\end{equation}

these are 2nd order poles along the imaginary axis.

The coordinate transformation to the variable $\xi$, defined by
\begin{equation}
\sech^2{(\alpha x)}=\left(1-\xi^2\right),\;\;\;\xi\in\left[-1,1\right],\label{tr1}
\end{equation}

moves the singularity at $x=\infty$ to $\xi\pm1$, and since
\begin{equation}
\xi^2=1-\sech^2{(\alpha x)}=\tanh^2{\alpha x},\label{tr2}
\end{equation}

we choose the positive sign of the square root of Eq. \eqref{tr2} (see Remark \ref{symmetries} below)
\begin{equation}
\xi=\tanh{(\alpha x)}.\label{tr3}
\end{equation}

Moreover all the poles along the imaginary axis move to the point at infinity.  Using Eq. \eqref{tr3}, Eq. \eqref{S1} is transformed into a differential equation with three \textit{regular} singular points at $(\pm1, \infty)$,
\begin{equation}
\left(1-\xi^2\right)\frac{d^{2}\psi}{d\xi^2}-2\xi\frac{d\psi}{d\xi}+\left[\frac{2E}{\alpha^2\left(1-\xi^2\right)}-\frac{2U_0}{\alpha^2}\right]\psi=0,\label{ALE}
\end{equation}

where now $\psi=\psi(\xi)$. Redefining the constants in the two terms inside the square brackets by,
\begin{align}
-\frac{2U_0}{\alpha^2}&=\nu(\nu+1),\label{nu}\\\nonumber\\
-\frac{2E}{\alpha^2}&=\mu^2,\label{mu}
\end{align}

and substituting these expressions into Eq. \eqref{ALE} results in a standard form of the Associated Legendre  equation, 
\begin{equation}
\left(1-\xi^2\right)\frac{d^{2}\psi}{d\xi^2}-2\xi\frac{d\psi}{d\xi}+\left[\nu(\nu+1)-\frac{\mu^2}{\left(1-\xi^2\right)}\right]\psi=0,\label{L1}
\end{equation}
whose solutions are the associated Legendre functions $\psi(\xi)=P^{\mu}_{\nu}(\xi)$.\\

\begin{remark}\label{symmetries}
The differential equation Eq. \eqref{L1} does not change under the substitutions $\mu\to -\mu,\;\nu\to -\nu-1$ or $\xi\to -\xi$. 
\end{remark}

\medskip

In our barrier case $E>0$, $k^2=2E$, $\mu^2=-k^2/\alpha^2<0$, and therefore $\mu=\pm ik/\alpha$.  From \cite{OLBC10}, p. 353, the Wronskian of $P^{\mu}_{\nu}(\xi)$ and $P^{-\mu}_{\nu}(\xi)$ is,
\begin{equation}
\mathcal{W}\left[P^{\mu}_{\nu}(\xi),P^{-\mu}_{\nu}(\xi)\right]=\frac{2\sin{(\mu \pi)}}{\pi\left(\xi^2-1\right)}.\label{W}
\end{equation}

We choose $\mu=ik/\alpha$, and since our $\mu$ is not an integer for $k\neq 0$, it follows that $P^{\mu}_{\nu}(\xi)$ and $P^{-\mu}_{\nu}(\xi)$ are two independent solutions.  As we shall see, $P^{\mu}_{\nu}(\xi)$ has the desired asymptotic behavior while $P^{-\mu}_{\nu}(\xi)$ does not.  Therefore we write the eigenfunction $\psi(\xi)$, corresponding to the energy eigenvalue $E=k^{2}/2$, as
\begin{equation}
\psi(\xi)=\mathcal{N}P^{\mu}_{\nu}(\xi),\label{psi}
\end{equation}

where the $k$-dependence is in $\mu$ and $\mathcal{N}$, an as yet undetermined normalization constant.
 
As $x\to \infty,\;\xi\to 1-$, the properly normalized $\psi(\xi)$, should be asymptotic to a transmitted wave $\psi_t$ moving to the right,
\begin{equation}
\psi\sim\psi_t=t\,e^{ikx},\label{a+}
\end{equation}

while, as $x\to -\infty,\;\xi\to -1+$, it should be asymptotic to an incident wave $\psi_i$ and a reflected $\psi_r$ wave,
\begin{equation}
\psi\sim\psi_i+\psi_r=e^{ikx}+r\,e^{-ikx},\label{a-}
\end{equation}

where $r^2=R$ and $t^2=T$ are the reflection and transmission coefficients respectively.  We show below that the solution of {Eq.\eqref{LU2} has the expected asymptotic behavior and we use it to find the exact expressions for $r$ and $t$ and hence $R$ and $T$.

In order to determine the asymptotic behavior of $\psi$ we shall need suitable expressions for $P^{\mu}_{\nu}(\xi)$ in terms of the hypergeometric functions $F\left[\alpha,\beta,\gamma,\tfrac{1}{2}(1\pm\xi)\right]$ valid in the interval $\xi\in (-1,1)$.

The standard expression of $P^{\mu}_{\nu}(\xi)$ in terms of $F$ in the interval $\xi\in (-1,1)$ is (see e.g., \cite{WG10}, p. 255),
\begin{equation}
P^{\mu}_{\nu}(\xi)=\frac{1}{\Gamma(1-\mu)}\left(\frac{1+\xi}{1-\xi}\right)^{\frac{\mu}{2}}F\left[-\nu,\nu+1,1-\mu,\tfrac{1}{2}(1-\xi)\right].\label{LP1}
\end{equation}

We shall see that in our case Eq. \eqref{LP1} is well-behaved as we approach the limit $x\to \infty,\;\xi\to 1-$,  since $F\left[-\nu,\nu+1,1-\mu,0\right]=1$ and the asymptotic behavior of $P^{\mu}_{\nu}(\xi)$ in that limit  is determined by the overall factors $(1\pm\xi)^{\pm\frac{\mu}{2}}$.  Unfortunately in the limit $x\to -\infty,\;\xi\to -1+$, we obtain $F\left[-\nu,\nu+1,1-\mu,1\right]$ and the hypergeometric series diverges for our set of parameters.

An alternative expression for $P^{\mu}_{\nu}(\xi)$, also valid in the interval $\xi\in (-1,1)$, and well-behaved in the limit $x\to -\infty,\;\xi\to -1+$, can be found by analytic continuation and is derived in e.g., \cite{WG10}, p. 257.  \begin{align}
P^{\mu}_{\nu}(\xi)&=\frac{\Gamma(-\mu)}{\Gamma(1+\nu-\mu)\Gamma(-\nu-\mu)}\left(\frac{1+\xi}{1-\xi}\right)^{\frac{\mu}{2}}F\left[-\nu,\nu+1,1+\mu,\tfrac{1}{2}(1+\xi)\right]\nonumber\\\nonumber\\
&-\frac{\sin{(\pi\nu)}}{\pi}\,\Gamma(\mu)\left(\frac{1-\xi}{1+\xi}\right)^{\frac{\mu}{2}}F\left[-\nu,\nu+1,1-\mu,\tfrac{1}{2}(1+\xi)\right].\label{LP2}
\end{align}
Note that in this case we have $F\left[-\nu,\nu+1,1\pm\mu,0\right]=1$, and again the asymptotic behavior is given by the overall factors of $(1\pm\xi)^{\pm\frac{\mu}{2}}$ of each term.

Using the relation $\xi=\tanh{(\alpha x)}$ and $\mu=ik/\alpha$ we find that
\begin{align}
\left(\frac{1+\xi}{1-\xi}\right)^{\frac{ik}{2\alpha}}&=e^{ikx},\label{asy1}\\\nonumber\\
\left(\frac{1-\xi}{1+\xi}\right)^{\frac{ik}{2\alpha}}&=e^{-ikx}.\label{asy2}
\end{align}

We now make use of the relations,
\begin{align}
\nu&=\frac{1}{2}\left(-1+\sqrt{1-\frac{8U_0}{\alpha^2}}\right),\label{s1}\\
\nu&(\nu+1)=-\frac{2U_0}{\alpha^2},\label{s2}
\end{align}

where $\nu$ could be real or complex, and Eq. \eqref{asy1} to evaluate Eq, \eqref{LP1} in the limit $x\to \infty,\;\xi\to 1-$.  We obtain the (un-normalized) transmitted part of the wavefunction
\begin{equation}
\psi\sim\frac{1}{\Gamma\left(1-\frac{ik}{\alpha}\right)}\,e^{ikx}.\label{psit}
\end{equation}

We repeat the process using both Eqs. \eqref{asy1} and \eqref{asy2} and evaluate Eq. \eqref{LP2}, in the limit $x\to -\infty,\;\xi\to -1+$.  In this case we obtain two terms, namely, the (un-normalized) incident and the reflected part of the wavefunction,
\begin{equation}
\psi\sim\frac{\Gamma\left(-\frac{ik}{\alpha}\right)}{\Gamma\left(1+\nu-\frac{ik}{\alpha}\right)\Gamma\left(-\nu-\frac{ik}{\alpha}\right)}\,e^{ikx}-\frac{\sin{(\pi\nu)}}{\pi}\,\Gamma\left(\frac{ik}{\alpha}\right)e^{-ikx}.\label{upsir}
\end{equation}

The three asymptotic parts of the wavefunction in Eqs. \eqref{psit} and \eqref{upsir} have to be normalized in accordance with Eq.\eqref{a-}, that is, the coefficient of $e^{ikx}$ in Eq. \eqref{upsir} should be equal to 1.  Dividing each term in Eqs. \eqref{psit} and \eqref{upsir}, by the coefficient of the first term of Eq. \eqref{upsir} gives us the required normalization factor, $\mathcal{N}$, in Eq. \eqref{psi},
\begin{equation}
\mathcal{N}=\frac{\Gamma\left(-\nu-\frac{ik}{\alpha}\right)\Gamma\left(1+\nu-\frac{ik}{\alpha}\right)}{\Gamma\left(-\frac{ik}{\alpha}\right)},\label{N}
\end{equation}

and consequently,
\begin{align}
r&=-\frac{\Gamma\left(-\nu-\frac{ik}{\alpha}\right)\Gamma\left(1+\nu-\frac{ik}{\alpha}\right)\Gamma\left(\frac{ik}{\alpha}\right)\sin{(\pi\nu)}}{\pi\;\Gamma\left(-\frac{ik}{\alpha}\right)},\label{Lr}\\\nonumber\\
t&=\frac{\Gamma\left(-\nu-\frac{ik}{\alpha}\right)\Gamma\left(1+\nu-\frac{ik}{\alpha}\right)}{\Gamma\left(1-\frac{ik}{\alpha}\right)\Gamma\left(-\frac{ik}{\alpha}\right)},\label{Lt}
\end{align}

and, as usual,
\begin{equation}
r^{*}r=R,\quad t^{*}t=T.\label{rt}
\end{equation}

\begin{remark}\label{UoIneq}
We note that the parameter $\nu$, Eq. \eqref{s1}, may be real or complex, depending on whether $(8U_0/\alpha^2)\lessgtr 1$ and consequently there will be two different $R$ and $T$'s depending on the numerical values of the parameters $U_0$ and $\alpha$ which affect the square root appearing in the denominator in Eq. \eqref{T<}.
\end{remark} 

The transmission coefficient for the case $(8U_0/\alpha^2)<1$ is
\begin{equation}
T=\frac{2 \sinh^2{\left(\frac{\pi k}{\alpha}\right)}}{\cos{\left(\pi\sqrt{1-\frac{8U_0}{\alpha^2}}\right)}+\cosh{\left(\frac{2 \pi k}{\alpha}\right)}},\label{T<}
\end{equation}

while in the case $(8U_0/\alpha^2)>1$ it is
\begin{equation}
T=\frac{2 \sinh^2{\left(\frac{\pi k}{\alpha}\right)}}{\cosh{\left(\pi\sqrt{\frac{8U_0}{\alpha^2}-1}\right)}+\cosh{\left(\frac{2 \pi k}{\alpha}\right)}}.\label{T>}
\end{equation}

We turn now to a modfication of the potential of Eq.\eqref{LU2}. It is straightforward to shift  Eq.\eqref{LU2} downward and/or horizontally by introducing the parameters $\beta$ and $\gamma$ as follows (see Fig. \ref{LLUs}).
\begin{equation}
U_s(x)=-\frac{\beta^2}{2}+\frac{U_0}{\cosh^2(\alpha (x-\gamma))}.\label{Lshift}
\end{equation}

\begin{figure}[htbp!]
  \begin{center}
    \includegraphics[width=4 in]{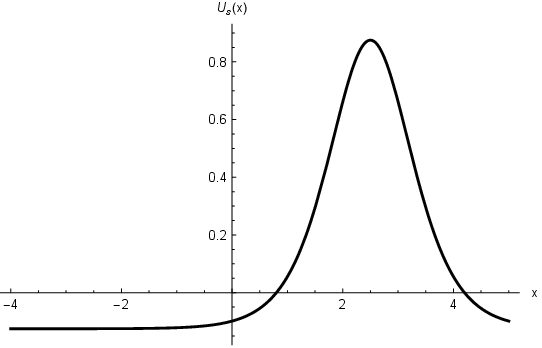}
     \end{center}
   \caption{The shifted Landau and Lifshitz's barrier.  The parameters are $U_0=1$, $\alpha=1$, $\beta=0.5$ and $\gamma=2.5$.}
 \label{LLUs}
\end{figure}

The potential $U_s(x)=0$ at 
\begin{equation}
x=\gamma\pm\frac{\cosh^{-1}{(\frac{\sqrt{2U_0}}{\beta})}}{\alpha},\label{Ush}
\end{equation}

provided that $\beta<\sqrt{2U_0}$.

The associated Legendre's equation for this potential is
\begin{equation}
\left(1-\xi^2\right)\frac{d^{2}\psi}{d\xi^2}-2\xi\frac{d\psi}{d\xi}+\left[\nu(\nu+1)-\frac{\mu^2}{\left(1-\xi^2\right)}\right]\psi=0,\label{LDE}
\end{equation}

where now
\begin{align}
\xi&=\tanh{(\alpha(x-\gamma))},\label{shxi}\\\nonumber\\
\mu&=\frac{i\sqrt{k^2+\beta^2}}{\alpha},\label{alphash}\\\nonumber\\
\nu&=\frac{1}{2}\left(-1+\sqrt{1-\frac{8U_0}{\alpha^2}}\right),\label{nush1}\\\nonumber\\
\nu&(\nu+1)=-\frac{2U_0}{\alpha^2}.\label{nush2}
\end{align}

This equation can be solved using the methods of this section, in particular for 
\begin{equation}
x\in \bigg[ \gamma\ - \frac{\cosh^{-1}{(\frac{\sqrt{2U_0}}{\beta})}}{\alpha},\,\, \gamma\ + \frac{\cosh^{-1}{(\frac{\sqrt{2U_0}}{\beta})}}{\alpha} \bigg],
\end{equation}
in other words, within the range of $x$ values where the potential is nonnegative.  

So we can "cut off" the part of the potential of Eq.\eqref{Lshift} that falls below the $x$ axis, thereby defining a new potential with compact support.  With this, one is able to combine various compact support potentials, of this type or parabolic, to create multiple barriers, see eg., Fig. \ref{mixedDBs}.\\

\begin{remark}\label{LLal}
We remind the reader that since we have adopted Landau and Lifshitz' notation in Sec. \ref{LL}, the $\alpha$ here has dimensions of inverse length and should not be confused with the $\alpha$ in the other sections.
\end{remark}

\begin{figure}[htbp!]
  \begin{center}
    \includegraphics[width=4 in]{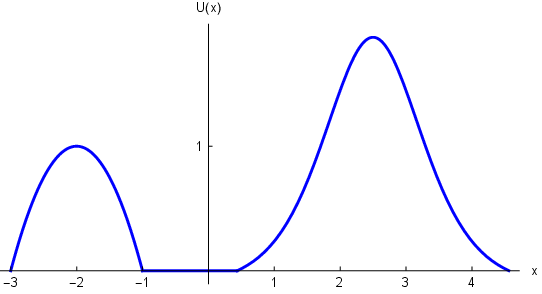}
     \end{center}
   \caption{An example of a ``mixed'' doubled barrier, parabolic on the left and a shifted potential of the form of Eq.\eqref{Lshift} on the right.}
 \label{mixedDBs}
\end{figure}

\newpage

\section{Summary}\label{sum}

We have derived exact wavefunction solutions to the one-dimensional Schr\"{o}dinger equation with parabolic potential barriers, Landau (i.e., squared hyperbolic secant) potential barriers and ``shifted'' versions of the latter with compact support.  Included among our results are exact calculations of transmission and reflection coefficients, dwell times, and identification of a quasi-bound state for a double parabolic potential. We showed how combining our results leads to exact solutions for mixed multiple barriers of the type studied here as well as other compactly supported barriers with known solutions.
 
\section*{Appendices}\label{App}
\appendix
\section{Coordinate transformation}\label{ct}

\begin{proposition}
Let $\sigma>0$, assume that $\lambda$ is real and that $V(x)$ is a solution of the differential equation
\begin{equation}
\frac{d^2 V}{dx^2}+ (\sigma x^2 - \lambda)V=0.\label{1}
\end{equation}

Then the function 
\begin{equation}
w(z):=V\left(\frac{z}{\sqrt{2}\,\sigma^{(1/4)}}\right),\label{2}
\end{equation}

is a solution to
\begin{equation}
\frac{d^2 w}{dz^2}+\left(\frac{z^2}{4}-\frac{\lambda}{2\sqrt{\sigma}}\right)\!w(z)=0.\label{3}
\end{equation}

Conversely, if $w(z)$ is a solution to
\begin{equation}
\frac{d^2 w}{dz^2}+\left(\frac{z^2}{4}-a\right)w(z)=0,\label{4}
\end{equation}

where the parameter $a$ is real, then the function
\begin{equation}
V(x)=w(\sqrt{2}\,\sigma^{(1/4)}x),\label{5}
\end{equation}

is a solution to
\begin{equation}
\frac{d^2 V}{dx^2}+ (\sigma x^2 - 2a\sqrt{\sigma})V=0.\label{6}
\end{equation}
\end{proposition}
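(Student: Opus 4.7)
The plan is to prove both implications by a direct change of variables via the chain rule. Because the map $x \leftrightarrow z = \sqrt{2}\,\sigma^{1/4}\,x$ is its own inverse up to swapping the roles of $V$ and $w$, the two directions amount to the same computation read in opposite orders, once one identifies $a = \lambda/(2\sqrt{\sigma})$, equivalently $\lambda = 2a\sqrt{\sigma}$.

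For the forward direction, I would differentiate $w(z) = V(z/(\sqrt{2}\,\sigma^{1/4}))$ twice in $z$. Since the inner substitution is linear, only its constant derivative $1/(\sqrt{2}\,\sigma^{1/4})$ enters, with squared Jacobian factor $1/(2\sqrt{\sigma})$, so the chain rule gives
\begin{equation*}
\frac{d^2 w}{d z^2} \;=\; \frac{1}{2\sqrt{\sigma}}\,V''(x).
\end{equation*}
Substituting $V''(x) = -(\sigma x^2 - \lambda)V(x)$ from \eqref{1} and using $\sigma x^2 = z^2 \sqrt{\sigma}/2$ after re-expressing $x$ in terms of $z$, the bracketed expression collapses to $z^2/4 - \lambda/(2\sqrt{\sigma})$, which is precisely \eqref{3}.

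For the converse, the same calculation runs in the opposite direction applied to $V(x) := w(\sqrt{2}\,\sigma^{1/4}\,x)$: here the squared Jacobian is $(dz/dx)^2 = 2\sqrt{\sigma}$, so $V''(x) = 2\sqrt{\sigma}\,w''(z)$. Replacing $w''(z)$ by $-(z^2/4 - a)w(z)$ using \eqref{4}, and applying the inverse change of variables $z^2/4 = \sqrt{\sigma}\,x^2/2$, one obtains $V''(x) = -(\sigma x^2 - 2a\sqrt{\sigma})V(x)$, which rearranges to \eqref{6}.

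There is no real obstacle; the entire proof is a two-line chain-rule calculation. The only care needed is bookkeeping the factors of $\sqrt{\sigma}$: the Jacobian contributes $1/(2\sqrt{\sigma})$ while the substitution $z^2 = 2\sqrt{\sigma}\,x^2$ contributes the inverse factor, and these must conspire so that the coefficient $\sigma x^2 - \lambda$ in \eqref{1} transforms into exactly $z^2/4 - \lambda/(2\sqrt{\sigma})$ in \eqref{3}. The hypothesis $\sigma > 0$ is used only to make $\sigma^{1/4}$ and $\sqrt{\sigma}$ well-defined as positive reals; the reality of $\lambda$ (resp.\ $a$) plays no algebraic role but ensures the target equation has real coefficients.
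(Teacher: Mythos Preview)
Your proposal is correct and follows essentially the same approach as the paper's own proof: both arguments amount to applying the chain rule to the linear substitution $z=\sqrt{2}\,\sigma^{1/4}x$ (and its inverse), obtaining $d^2V/dx^2 = 2\sqrt{\sigma}\,d^2w/dz^2$ and then substituting $\sigma x^2 = \sqrt{\sigma}\,z^2/2$ to convert between the two ODEs. Your write-up is more compressed but matches the paper's computation step for step, including the identification $a=\lambda/(2\sqrt{\sigma})$.
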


\begin{proof}
Let $x$ be a function of $z$ given by
\begin{equation}
x(z)= \frac{z}{\sqrt{2}\,\sigma^{(1/4)}}\label{7}
\end{equation}

and define,
\begin{equation}
w(z):=V(x(z))= V\left(\frac{z}{\sqrt{2}\,\sigma^{(1/4)}}\right),\label{8}
\end{equation}

So we may regard $V=V(x(z))$ as a function of $z$ and the same is true of the second derivative of $V$ with respect to $x$. 

then by the chain rule,
\begin{equation}
\frac{dw}{dz}=\frac{dV}{dx}\frac{dx}{dz}=\frac{dV}{dx}\frac{1}{\sqrt{2}\,\sigma^{(1/4)}}\label{9}
\end{equation}

or,
\begin{equation}
\frac{dV}{dx}=\sqrt{2}\,\sigma^{(1/4)}\frac{dw}{dz}.\label{10}
\end{equation}

Therefore,
\begin{equation}
\frac{d^2w}{dz^2}=\frac{d}{dz}\left(\frac{dw}{dz}\right)=\frac{d}{dx}\left(\frac{dV}{dx}\frac{1}{\sqrt{2}\,\sigma^{(1/4)}}\right)\frac{dx}{dz}.\label{11}
\end{equation}

Simplifying Eq.\eqref{11} gives,
\begin{equation}
\frac{d^2 V}{dx^2}=2\sqrt{\sigma}\,\frac{d^2 w}{dz^2}\label{12}.
\end{equation}

Combining the above results, we have that
\begin{equation}
\frac{d^2 V}{dx^2}+(\sigma x^2 - \lambda)V= 2\sqrt{\sigma}\frac{d^2 w}{dz^2}+\left(\sigma \frac{z^2}{2\sqrt{\sigma}}-\lambda\right)\!w(z)=0,\label{13}
\end{equation}

therefore
\begin{equation}
\frac{d^2 w}{dz^2}+\left(\frac{z^2}{4}-\frac{\lambda}{2\sqrt{\sigma}}\right)\!w(z)=0.\label{14}
\end{equation}

Thus if $V$ is a solution of Eq. \eqref{6}, then $w(z):= V\!\left(z/(\sqrt{2}\,\sigma^{(1/4)})\right)$ is a solution of Eq. \eqref{4} with $a=\lambda/(2\sqrt{\sigma})$.

In a similar way, one can go in the reverse direction.

Let $z$ be a function of $x$ given by
\begin{equation}
z(x)=\sqrt{2}\,\sigma^{(1/4)}x,\label{15}
\end{equation}

and define,
\begin{equation}
V(x)=w(z(x))=w(\sqrt{2}\,\sigma^{(1/4)}x).\label{16}
\end{equation}

Then by the chain rule,
\begin{equation}
\frac{dV}{dx}=\frac{dw}{dz}\frac{dz}{dx}=\sqrt{2}\,\sigma^{(1/4)}\frac{dw}{dz},\label{17}
\end{equation}

or,
\begin{equation}
\frac{dw}{dz}=\frac{1}{\sqrt{2}\,\sigma^{(1/4)}}\frac{dV}{dx}.\label{18}
\end{equation}

Similarly,
\begin{equation}
\frac{d^2V}{dx^2}=\sqrt{2}\,\sigma^{(1/4)}\frac{d}{dx}\left(\frac{dw}{dz}\right)=\sqrt{2}\,\sigma^{(1/4)}\frac{d}{dz}\left(\frac{dw}{dz}\right)\frac{dz}{dx}= 2\sqrt{\sigma}\,\frac{d^2w}{dz^2}.\label{19}
\end{equation}

Thus,
\begin{equation}
\frac{d^2w}{dz^2}=\frac{1}{2\sqrt{\sigma}}\frac{d^2V}{dx^2}.\label{20}
\end{equation}

Therefore,
\begin{equation}
\frac{d^2 w}{dz^2}+\left(\frac{z^2}{4}-a\right)w(z)=\frac{1}{2\sqrt{\sigma}}\frac{d^2V}{dx^2}+\left(\frac{2\sqrt{\sigma}}{4}x^2 -a\right)\!V=0.\label{21}
\end{equation}

Simplifying gives,
\begin{equation}
\frac{d^2 V}{dx^2}+ (\sigma x^2 - 2a\sqrt{\sigma})V=0.\label{22}
\end{equation}

Thus if $w$ is a solution of Eq. \eqref{3}, then $V(x)= w(z(x))=w(\sqrt{2}\sigma^{(1/4)}x)$ is a solution of Eq. \eqref{6} with $\lambda=2a\sqrt{\sigma}$.
\end{proof}

\section{Power series representation of $w_e$ and $w_o$}\label{psw}

The functions $w_e(a,z)$ and $w_o(a,z)$ given respectively by Eqs. \eqref{we} and \eqref{wo} may also be expressed in terms of power series (cf. \cite{OLBC10}) as follows: 

\begin{align}
w_e(a,z)&=\sum^{\infty}_{n=0}\alpha_n \frac{z^{2n}}{(2n)!},\label{eseries}\\
w_o(a,z)&=\sum^{\infty}_{n=0}\beta_n \frac{z^{2n+1}}{(2n+1)!}\label{oseries},
\end{align}

In these series, the coefficients, $\alpha_n$ and  $\beta_n$, satisfy these recursion relations:
\begin{align}
\alpha_{n+2}&= a\,\alpha_{n+1}-\tfrac{1}{2}(n+1)(2n+1)\alpha_n,\label{rec1}\\
\beta_{n+2}&= a\,\beta_{n+1}-\tfrac{1}{2}(n+1)(2n+3)\beta_n,\label{rec2}
\end{align}

with $\alpha_0 = 1, \alpha_1 = a, \beta_0 = 1, \beta_1 = a$.

It follows immediately from Eqs \eqref{eseries} and \eqref{oseries} that $w_e(a,z)$ and $w_o(a,z)$ are real valued and that $w_e(a,z)$ is even and $w_o(a,z)$ is odd.  Moreover,
\begin{equation}\label{windep}
w_e(a, 0)=1, \quad w_e^\prime(a,0)=0, \quad w_o(a,0)=0, \quad w_o^\prime(a,0)=1,
\end{equation}
from which Eqs.\eqref{Fnorm} follow.

\section{Dwell time and probability current}\label{jdt}

Most authors attribute the standard equation for the dwell time to B\"{u}ttiker \cite{Bu83}.  He considered incident particles on a single rectangular barrier of height $V$ and extending from $x=0$ to $x=a$. The wave function solution, in this case, consists of the usual three parts:
\begin{align}
\psi_1(x)&=e^{ikx}+r\,e^{-ikx},\;\;\;x\leq 0,\label{cpsi1}\\
\psi_2(x)&=b\,e^{\kappa x}+c\,e^{-\kappa x},\;\;\;0\leq x\leq a,\label{cpsi2}\\
\psi_3(x)&=t\,e^{ikx},\hspace{1.65cm}a\leq x.\label{cpsi3}
\end{align}

With the above ``normalization'' of $\psi_{1}$, B\"{u}ttiker gave the expression below for the dwell time $\tau$,
\begin{equation}
\tau=\left(\frac{1}{j_{in}}\right)\int_{0}^{a}\vert\psi\vert^{2}dx=\left(\frac{1}{k}\right)\int_{0}^{a}\vert\psi\vert^{2}dx,\label{Bdt}
\end{equation}

where the incoming current $j_{in}=k$.   Winful \cite{W03} defines $\tau$ as a measure of the time spent by a particle in the barrier region $x\in\left[0,a\right]$ regardless of whether the particle is ultimately transmitted or reflected.

We note that in B\"{u}ttiker \cite{Bu83}, $\tau$ is defined as the ratio of the number of particles in $\left[0,a\right]$, with energy $E=k^2/2$, to the incident flux $j_{in}=k$.

Leavens and Aers \cite{LA89} reconcile the time dependent wave packet treatment with the steady-state scattering solution of the time-independent Schr\"{o}dinger equation and give a more general definition of $\tau$.

We also mention that the dwell times of bound states e.g., the states of the particle in the box or, say, the bound states of $U(x)=-U_{0}/\cosh^{2}{(\alpha x)}$, are infinite since $j_{in}=0$ in these cases.

Our considerations below will apply to one-dimensional potentials $V(x)$ with compact support in some interval $x\in\left[a,c\right]$, with particles going from left to right.

Although a large number of papers have dealt with and applied Eq. \eqref{Bdt} in different situations,  \cite{W03,LA89,LA89b,HS89,W06} and references therein, there is no proof that one should use $j_{in}=k$ for every region of a potential $V(x)$ with more than one barrier, since the initial incoming wave, moving to the right, has altered forms in the regions between barriers. We supply the missing proof below.

We refer to Fig. \ref{regDB} for a concrete illustration.  Such potentials have been considered by Dutt and Kar \cite{DK10}.\\

\begin{proposition}\label{Prop}
Let  $V(x)$ be a one-dimensional potential (in general a smooth multiple barrier) with compact support in $x\in\left[a,c\right]$.  Assume that the particles are moving from left to right with fixed momentum.  For region 1, defined by $x<a$, let the incoming wavefunction be normalized so that $\psi=e^{ikx}+re^{-ikx}$, while in region 7 we have $\psi=te^{ikx}$, for some values of $r$ and $t$.  Then the equation for the dwell time in the interval $\left[x_1,x_2\right]$, for any $x_1$ and $x_2$ with $x_1 < x_2$, is
\begin{equation}
\tau_{\left[x_{1}x_{2}\right]}=\left(\frac{1}{k}\right)\int_{x_{1}}^{x_{2}}\vert\psi(x)\vert^{2}dx.\label{dt1}
\end{equation}
\end{proposition}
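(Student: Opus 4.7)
My strategy is to reduce the claim to two ingredients: the constancy of the one-dimensional probability current for a stationary state, and the identification of $j_{in}$ from the fixed left-asymptotic normalization.

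First, I would define the probability current $j(x) = \mathrm{Im}(\psi^{*}(x)\psi'(x))$ (in the units $\hbar = m = 1$ used throughout the paper). Since $\psi$ satisfies the time-independent Schr\"odinger equation with a real potential $V(x)$, differentiating $j$ and substituting the Schr\"odinger equation gives $j'(x) = 0$ on all of $\mathbb{R}$. This step is purely algebraic and only requires $\psi \in C^{1}$, which Remark \ref{C2} already guarantees even across the support boundaries of $V$. Hence $j(x) \equiv j_{0}$, a single constant.

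Second, I would evaluate $j_{0}$ in the far-left asymptotic region using $\psi_{1} = e^{ikx} + re^{-ikx}$. A direct calculation shows that the cross terms $re^{-2ikx}$ and $r^{*}e^{2ikx}$ combine into a purely imaginary expression whose imaginary part (after multiplication by $ik$) vanishes, leaving $j_{0} = k(1 - |r|^{2})$; the region-7 form gives $j_{0} = k|t|^{2}$ as a consistency check. The natural decomposition $j_{0} = j_{in} - j_{refl}$, with $j_{in} = k$ the flux of the pure incident plane wave $e^{ikx}$ and $j_{refl} = k|r|^{2}$ that of $re^{-ikx}$, \emph{defines} the incident flux as $k$, determined entirely by the imposed boundary normalization.

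Then I would invoke the B\"uttiker/Winful definition of dwell time: the number of particles present in $[x_{1},x_{2}]$ in steady state, divided by the rate at which particles are injected from the source. The numerator is $N = \int_{x_{1}}^{x_{2}} |\psi(x)|^{2}\,dx$ and the denominator is the source flux $j_{in} = k$, which together yield Eq.\eqref{dt1}.

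The main obstacle is conceptual rather than computational: when $[x_{1},x_{2}]$ sits in a free sub-interval between two constituent barriers, $\psi$ has the form $A_{j}e^{ikx} + B_{j}e^{-ikx}$ with \emph{local} right-moving current $k|A_{j}|^{2} \neq k$, so one might be tempted to substitute $k|A_{j}|^{2}$ for $j_{in}$. The resolution, which the preceding current-conservation step makes precise, is that the dwell time is normalized against the \emph{source}, not against any local right-moving component: no probability is generated in the interior (since $j$ is globally constant and $V$ is real), so every particle residing in any inner region descends from the incident flux $k$ at $x=-\infty$, and the prefactor $1/k$ therefore applies uniformly in $x_{1}$ and $x_{2}$ regardless of how many barriers $V$ contains.
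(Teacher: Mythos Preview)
Your argument is correct, but it proceeds differently from the paper. The paper's proof is a pure additivity argument: it takes B\"uttiker's formula $\tau_{[a,c]}=(1/k)\int_a^c|\psi|^2\,dx$ for the \emph{full} support interval as the starting point, writes $\psi=\sum_A\psi_A I_{[A]}$ over the regions determined by the turning points, uses $I_{[A]}I_{[B]}=\delta_{AB}I_{[A]}$ to get $|\psi|^2=\sum_A|\psi_A|^2 I_{[A]}$, and then simply splits the integral, so that each piece $\tau_{[A]}=(1/k)\int_A|\psi_A|^2\,dx$ inherits the same prefactor $1/k$. There is no appeal to current conservation; the logic is that once the whole-barrier formula is granted, the sub-interval formula follows from linearity of the integral. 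Your route instead supplies the physical reason the paper's preamble says is missing: you show $j(x)$ is globally constant and then argue that the source flux $k$, not any local right-moving component $k|A_j|^2$, is the correct normalization for \emph{every} sub-interval. This is a genuinely stronger justification, and it applies directly to arbitrary $[x_1,x_2]$ rather than first restricting to consecutive turning points as the paper does. The paper's version is shorter and purely formal; yours actually engages the conceptual worry about intermediate regions that motivated the proposition.
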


\begin{proof}
The dwell time $\tau_{\left[a,c\right]}$ of a particle of energy $E$ in the interval $\left[a,c\right]$, is given by
\begin{equation}
\tau_{\left[a,c\right]}=\left(\frac{1}{j_{in}}\right)\int_{a}^{c}\vert\psi(x)\vert^{2}dx=\left(\frac{1}{k}\right)\int_{a}^{c}\vert\psi(x)\vert^{2}dx,\label{DT}
\end{equation}

where $\psi(x)$ is the eigenfunction with energy $E=k^{2}/2$ and $j_{in}=k$, the incoming probability current entering $V(x)$ at $x=a$.

For simplicity and to fix ideas, we refer to Fig. \ref{regDB} for the rest of the proof and we suppose that $x_1$ and $x_2$ are consecutive turning points. We may write
\begin{equation}
\psi(x)=\psi_{1}(x)I_{\left[1\right]}(x)+\psi_2(x)I_{\left[2\right]}(x)+\cdots+\psi_7(x)I_{\left[7\right]}(x),\label{psi}
\end{equation}

\begin{figure}[htbp!]
  \begin{center}
    \includegraphics[width=4 in]{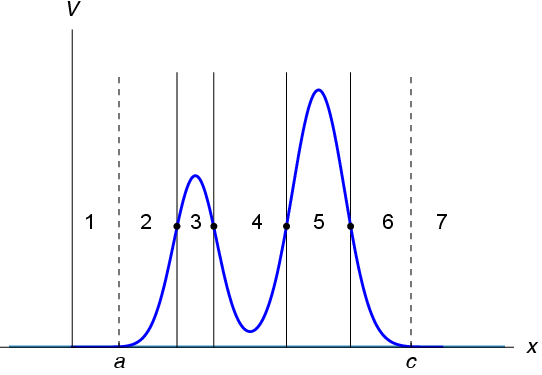}
     \end{center}
   \caption{A smooth double barrier and its regions.  The particle is moving from left to right.  The regions are determined by the intersections, of the horizontal line with ordinate the incident particle's energy $k^{2}/2$, with $V(x)$.}
  \label{regDB}
\end{figure}

where the 7 regions in Fig. \ref{regDB} are determined by the intersections of the horizontal line with ordinate equal to the incident particle's energy $k^{2}/2$, with $V(x)$.  The black dots show a hypothetical set of such intersections.  Each $I_{\left[A\right]}(x)$ is the indicator function for region $A$.  It is clear that
\begin{equation}
\vert\psi\vert^2=\vert\psi_1\vert^2I_{\left[1\right]}+\vert\psi_2\vert^2I_{\left[2\right]}+\cdots+
\vert\psi_7\vert^2I_{\left[7\right]},\label{psisq}
\end{equation}

since $I_{\left[A\right]}I_{\left[A\right]}=I_{\left[A\right]}$, and $I_{\left[A\right]}I_{\left[B\right]}=0$ if the intervals $\left[A\right]$ and $\left[B\right]$ overlap in at most one point.
Therefore Eq. \eqref{DT} becomes
\begin{align}
\tau_{\left[a,c\right]}&=\left(\frac{1}{k}\right)\int_{a}^{c}\sum_{A=1}^{5}\vert\psi_{A}\vert^{2}I_{\left[A\right]}dx\\\nonumber\\
&=\sum_{A=1}^5\tau_{\left[A\right]},\label{dts}
\end{align}

where
\begin{equation}
\tau_{\left[A\right]}=\left(\frac{1}{k}\right)\int_{A}\vert\psi_{A}\vert^{2}dx.\label{fdt}
\end{equation}
\end{proof}

\section{Inserting the $m$ and $\hbar$ back in the solutions}\label{units}

In performing calculations it is often convenient to set $\hbar=m=1$, as we have done throughout this article.  However, for the purpose of comparing calculations with experimental results, it may be necessary to recover these terms in the final expressions that result from calculations.

In our case, we wish to restore the appearance of  $\hbar$ and $m$ in the wave functions $\psi_i$ such as in Eqs \eqref{psie}, \eqref{psio}, \eqref{psie2}, \eqref{psio2} and so on, as well as for dwell times for various potentials and regions.

The wave function solutions, $\psi_i$, depend on position $x$ as well as parameters $\alpha,U_0,\gamma,k$, but it has been more convenient for us to use the parameter $\beta$ (see Eq.\eqref{beta}) in place of $U_0$. So we write $\psi_i=\psi_i(\alpha,\beta,\gamma,k,x)$, and recall that $\alpha$ and $\gamma$ are lengths and $k=p/\hbar$, where $p$ is momentum.

To carry this out, one must substitute $k=\sqrt{2mE}/\hbar$, and it follows from the Schr\"odinger equation,
\begin{equation}
\label{schroed}
\psi^{\prime\prime}-\frac{2m}{\hbar^2}(E-U)\psi=0,
\end{equation}
 that  $U_0$ must be replaced by $mU_0/\hbar^2$, in which case $\beta=\sqrt{2mU_0}/(\hbar\alpha)$. Then, when two parameters $\gamma_1$ and $\gamma_2$ are involved, for example,  the wave function becomes
\begin{equation}
\label{mnot1}
\varphi_i(\hbar, m, \alpha, \gamma_1,\gamma_2, U_0, E, x)\equiv\psi_i\left(\alpha, \frac{\sqrt{2mU_0}}{\alpha\hbar}, \gamma_1,\gamma_2, \frac{\sqrt{2mE}}{\hbar},x\right).
\end{equation}

The same substitutions must be made in formulas for dimensionless constants in the solutions, e.g., $r,\,A,\,B,\,C,\,D,\,t,$ in Eqs. \eqref{psi1}-\eqref{psi4}.

The incoming current $j_{in}$ and the turning points $x_t$ have dimensions, so they also require substitutions. The incoming current $j_{in}$ is evaluated using
\begin{equation}
\label{j}
j(\psi)=\frac{i\hbar}{2m}\left(\psi\partial_x\psi^\ast-\psi^\ast\partial_x\psi\right),
\end{equation}
and is then given by
\begin{equation}
\label{jinmh}
j\left[e^{i\frac{\sqrt{2mE}}{\hbar}x}\right]=j(m,E)=\sqrt{\frac{2E}{m}}.
\end{equation}

From Eq.\eqref{schroed}, the turning points, $x_t$, are found by solving the equation $U=E$.  For the  cases of Figs. \ref{singleBplot} and \ref{doubleBsym}, $U$ is given by Eq. \eqref{U2}.  So from $U=E$ we find that
\begin{equation}
\label{xt}
x_t(\alpha, \gamma, U_0, E)=\gamma\pm\alpha\sqrt{\left(1-\frac{E}{U_0}\right)}.
\end{equation}
It follows that the formula for dwell time is
\begin{equation}
\label{tauanyU}
\tau_{\left[a,b\right]}=\sqrt{\frac{m}{2E}}\int_{a}^{b}\vert\varphi_i\vert^2dx,
\end{equation}
which has dimensions of time, as it should.

Having introduced the $m$ and $\hbar$ in the expressions of interest, we may evaluate quantities in any system of units.  We have made use of the international system of units (SI) and the atomic system of units (au) in Sec. \ref{qbdw}.\\

\begin{remark}\label{m=h}
Note that in atomic units $m=\hbar=1$, but one may adopt other systems which also have $m=\hbar=1$, but where the unit of energy is not the hartree and the unit of length is not the bohr, but rather some other convenient reference scales.
\end{remark}

\end{document}